\documentclass[letterpaper, 10 pt, conference]{ieeeconf}  

\IEEEoverridecommandlockouts                              

\usepackage{graphics} 
\usepackage{epsfig} 
\usepackage{mathptmx} 
\usepackage{times} 
\usepackage{amsmath} 
\usepackage{amssymb}  
\usepackage{booktabs}
\usepackage[caption=false,font=footnotesize]{subfig}

\newcommand{\tp}{\normalfont \text{T}}

\newcommand{\Val}{\operatorname{Val}}

\newtheorem{remark}{Remark}

\newtheorem{cor}{Corollary}

\newtheorem{theorem}{Theorem}
\newtheorem{assumption}{Assumption}

\title{\LARGE \bf \texttt{Flip-Team}:  Cooperative Takeover Games with Stochastic Human Override
}

\author{Sandeep Banik$^{1}$ and Naira Hovakimyan$^{1}$
\thanks{$^{1}$Both the authors are affiliated with the department of Mechanical Science and Engineering, University of Illinois Urbana-Champaign, Illinois, USA; email:
        {\tt\small baniksan@illinois.edu, nhovakim@illinois.edu}}}

\begin{document}

\maketitle
\thispagestyle{empty}
\pagestyle{empty}

\begin{abstract}
Shared autonomy requires principled mechanisms for allocating and transferring control between a human and an autonomous agent.
Existing approaches often rely on blending control inputs or heuristic switching rules, which lack theoretical guarantees and fail to account for the dynamics of authority transfer.
This paper develops a cooperative game-theoretic framework for authority switching in shared autonomy.
We formulate the control switching problem as an identical-interest dynamic game in which authority transitions are embedded into the system dynamics, yielding optimal switching policies rather than ad hoc rules.
We establish the existence and characterization of team-optimal policies in pure strategies under stochastic human override, accounting for asymmetric authority where humans retain override capability.
For linear-quadratic systems, we derive closed-form recursions for the optimal switching policies and value functions, enabling efficient computation independent of the continuous state.
We validate the framework on scalar and multi-dimensional linear systems, demonstrating how optimal switching adapts to varying system dynamics, cost structures, and override probabilities.
The results reveal fundamental trade-offs between human adaptability and autonomous efficiency, illustrating the practical benefits of grounding shared autonomy in cooperative game theory.
\end{abstract}

\section{Introduction}\label{sec:Intro}

Modern cyber–physical systems (CPS) increasingly rely on \textit{shared autonomy}, where humans and automated agents dynamically share control authority. 
Such shared paradigm is critical in safety-critical domains, such as autonomous driving, assistive robotics, power grid operations, where autonomous systems excel at computation but lack adaptability, while humans provide contextual reasoning but may be overwhelmed or out of the loop~\cite{javdaniSharedAutonomyHindsight2015, nikolaidisHumanRobotMutualAdaptation2017}. 
Failures to coordinate takeovers have been implicated in real-world incidents from autopilot disengagement in aviation~\cite{gouraudAutopilotMindWandering2017} to autonomous vehicle crashes~\cite{alambeigiCrashThemesAutomated2020}, highlighting the need for principled frameworks for cooperative takeover.
This work adopts a cooperative formulation in which both human and autonomy act as a unified team pursuing a common mission objective, while retaining distinct roles: the human provides adaptive oversight with override capability, and the autonomous system offers consistent, low-cost control.

Early approaches to shared autonomy typically combined human and robot commands through a human-in-the-loop paradigm using control blending or arbitration mechanisms~\cite{draganPolicyblendingFormalismShared2013}. 
In these methods, the robot predicts the human’s intended goal and then determines the level of assistance to provide~\cite{aarnoMotionIntentionRecognition2008}. 
A common strategy treats human input as a noisy estimate of intent and computes a weighted blend with the robot's action~\cite{loseyReviewIntentDetection2018,abbinkTopologySharedControl2018}, implemented via Bayesian inference~\cite{aarnoMotionIntentionRecognition2008}, POMDP planning~\cite{javdaniSharedAutonomyHindsight2015}, or linear arbitration.
Critically, these blending approaches require continuous human engagement, namely, a human-in-the-loop paradigm, where the operator must provide input at every time step.
This places substantial cognitive burden on the human, limiting scalability to prolonged or multi-task operations.
In contrast, the present work adopts a human-on-the-loop paradigm, where the autonomous system operates independently while the human retains supervisory override authority, intervening only when necessary.
While blending methods allow users to complete tasks more efficiently, they rely heavily on accurate intent prediction and hand-tuned blending rules, often retaining task-specific heuristics without guarantees of stability or generalization~\cite{jeonSharedAutonomyLearned2020, reddySharedAutonomyDeep2018,javdaniSharedAutonomyHindsight2015}.

Beyond blending, two alternative paradigms have emerged. Game-theoretic approaches model shared autonomy as an interaction between decision-making agents, capturing mutual influence and strategic adaptation~\cite{liDifferentialGameTheory2019, sadighPlanningAutonomousCars2016, nikolaidisGameTheoreticModelingHuman2017}. However, most formulations assume symmetric roles or require full knowledge of human utility, which is often unrealistic. A separate line of work focuses on authority switching, where explicit decisions determine who has control at a given time. Frameworks for adjustable autonomy and mixed-initiative control dynamically allocate authority based on task demands, communication delays, or operator workload~\cite{scerriAdjustableAutonomyReal2002}, while adaptive autonomy shifts control according to human performance, trust, or safety thresholds~\cite{loseyPhysicalInteractionCommunication2022}. These approaches emphasize flexibility but remain heuristic, lacking principled criteria for when to switch. The present work addresses this gap by providing a cooperative framework that yields optimal switching policies under asymmetric authority, where the human retains override capability.

A related body of work on resource takeover games provides the mathematical foundation for this study. The \texttt{FlipDyn} framework introduced the idea of modeling authority over a dynamical system as a competitive game in which two players strategically “flip” control to minimize or maximize long-term performance~\cite{banikFlipDynGameResource2022a}.
The \texttt{FlipDyn} model, inspired by the \texttt{FlipIt} game~\cite{vandijkFlipItGameStealthy2013} of stealthy takeovers, formalized how equilibrium takeover strategies emerge when agents contest authority over linear–quadratic systems.
Subsequent extensions generalized the framework to networked dynamical systems, capturing how takeovers propagate across interconnected nodes and deriving equilibrium conditions for graph-structured interactions~\cite{zhu2015game}.
Other studies explored the role of cyber–physical resilience, applying game-theoretic reasoning to capture adversarial control switching in large-scale infrastructures~\cite{banikFlipDynGraphsResource2025}.
Collectively, these works establish rigorous methods for analyzing control takeovers in adversarial contexts. 
The present paper departs from the adversarial setting by translating these mathematical tools to \textit{cooperative shared autonomy}, where both agents pursue aligned objectives but still require principled strategies for when to yield or assert control.
We term this extension \texttt{Flip-Team}.
The framework applies to systems with explicit authority handoffs where control is mutually exclusive at each instant -- such as teleoperation, supervisory control, and semi-autonomous navigation. The main contributions are:
\begin{enumerate}
    \item \textbf{Cooperative takeover formulation}:
    We formulate the human–autonomy takeover problem as an identical-interest dynamic game in which the switching decision is embedded into the system dynamics.
    This formulation captures asymmetric authority, stochastic human override, and state-dependent costs within a unified framework.
    
    \item \textbf{Characterization of optimal switching}: 
    We establish the existence of team-optimal switching policies and characterize them in the space of pure strategies under stochastic human override.
    The analysis yields explicit threshold conditions for when authority transfer occurs in general dynamical systems.
    
    \item \textbf{Analytical solutions for linear–quadratic systems}: 
    For the class of linear–quadratic (LQ) systems, we derive closed-form recursions for the optimal switching policies and value functions.
    These results enable efficient computation of cooperative takeover policies in continuous state spaces, with switching thresholds that are \emph{independent of the continuous state}.
\end{enumerate}

We illustrate our results for the scalar and $n-$dimensional system. 
Section~\ref{sec:Problem_formulation} formalizes the cooperative takeover problem as an identical-interest dynamic game with switching dynamics. Section~\ref{sec:FlipCoop_General_Systems} establishes the existence and characterization of optimal switching policies, followed by closed-form analytical results for the linear–quadratic setting in Section~\ref{sec:FlipCoop_Linear_Systems}. 
Section~\ref{sec:Evaluation} demonstrates the framework on a scalar and 2D vehicle lane tracking task and Section~\ref{sec:Conclusion} summarizes the findings and discusses directions for future work.

\section{Problem Formulation}\label{sec:Problem_formulation}
Consider a discrete-time dynamical system controlled either by a human or an autonomous agent. 
The \texttt{FlipDyn} state, $\alpha_{k} \in \{\text{H},\text{A}\}$ indicates whether the human ($\alpha_k = \text{H}$) or the autonomous agent ($\alpha_k = \text{A}$) has control over the system at time $k$. 
The state evolution of the system controlled by the human is given by:
\begin{align}\label{eq:human_dynamics}
	x_{k+1} = F_{k}^{\text{H}}(x_k,u_k),
\end{align}
where $k$ denotes the discrete-time index, taking values from the integer set $\mathcal{K} := \{1,2,\dots, L\} \subset \mathbb{N}$, $x_{k} \in \mathbb{R}^{n}$ is the state of the system, $u_{k} \in \mathbb{R}^m$ is the control input of the system, and $F_{k}^{\text{H}}: \mathbb{R}^{n} \times \mathbb{R}^{m} \rightarrow \mathbb{R}^{n}$ is the state transition function.
Similarly, the state evolution under an autonomous agent is given by:
\begin{align}\label{eq:autonomous_dynamics}
	x_{k+1} = F_{k}^{\text{A}}(x_k,w_k),
\end{align}
where $F_{k}^{\text{A}}: \mathbb{R}^{n} \times \mathbb{R}^{p} \rightarrow \mathbb{R}^{n}$ is the state transition function under the autonomous agent with control input $w_k \in \mathbb{R}^{p}$. 
We describe a takeover through the action $\pi^{j}_k \in \{0,1\}$, which denotes the action of the agent $j \in \{\text{H},\text{A}\}$ at time $k$, where $j = \text{H}$ denotes the human and $j=\text{A}$ denotes the autonomous agent.
The action $\pi_{k}^{j} = 1, \forall j$ corresponds to takeover/request to takeover and $\pi_{k}^{j} = 0$ otherwise (idle). The binary \texttt{FlipDyn} state updates based on the agent's action and prior \texttt{FlipDyn} state. Given $\alpha_{k} = \text{H}$, the \texttt{FlipDyn} state at time $k+1$ is:
\begin{align}\label{eq:flip_state_H}
	\alpha_{k+1} &= \begin{cases}
		\alpha_{k}, & \text{if } \{\pi^{\text{H}}_{k} = 0,  \pi^{\text{A}}_{k} = 0\}, \\
        \text{H}, & \text{if } \{\pi^{\text{H}}_{k} = 0,  \pi^{\text{A}}_{k} = 1\}, \\
        \text{H}, & \text{if } \{\pi^{\text{H}}_{k} = 1,  \pi^{\text{A}}_{k} = 0\}, \\
        \text{A}, & \text{otherwise},
	\end{cases}
\end{align}
where $\pi^{\text{H}}_k = 0$ corresponds to human agent being idle or retaining the control of the system and $\pi^{\text{H}}_k = 1$ represents request to takeover.
The first three cases correspond to human retaining control; the last corresponds to consensual handoff to autonomy.
Similarly, given $\alpha_{k} = \text{A}$,  \texttt{FlipDyn} state update is:
\begin{align}\label{eq:flip_state_A}
	\alpha_{k+1} &= \begin{cases}
		\alpha_{k}, & \text{if } \{\pi^{\text{H}}_{k} = 0,  \pi^{\text{A}}_{k} = 0\}, \\
        \text{A}, & \text{if } \{\pi^{\text{H}}_{k} = 0,  \pi^{\text{A}}_{k} = 1\}, \\
        \text{H}, & \text{if } \{\pi^{\text{H}}_{k} = 1,  \pi^{\text{A}}_{k} = 0\} \text{ with probability } p_{k}, \\
        \text{A}, & \text{if } \{\pi^{\text{H}}_{k} = 1,  \pi^{\text{A}}_{k} = 0\} \text{ with probability } 1-p_{k}, \\
        \text{H}, & \text{otherwise},
	\end{cases}
\end{align}
where $\pi^{\text{H}}_k = 1$ corresponds to human agent takeover and $\pi^{\text{A}}_k = 1$ corresponds to request to takeover.
The parameter $p_k \in (0,1)$ represents the probability that a human override attempt succeeds, given the human has chosen to intervene ($\pi^{\text{H}}_k = 1$) and the autonomous agent has not requested handoff ($\pi^{\text{A}}_k = 0$).
This captures authority asymmetry arising from interface latency, arbitration mechanisms, or system design.
The autonomous agent retains control of the system deterministically when human agent is idle (second condition) and with probability $1-p_{k}$ (fourth condition) when human agent chooses to takeover.
The human agent takes over with probability $p_k$ when the autonomous agent does not request to takeover ($\pi^{\text{A}}_k = 0$) and deterministically takes over when both $\pi^{\text{A}}_k = \pi^{\text{H}}_k = 1$. Notice that the difference between~\eqref{eq:flip_state_H} and~\eqref{eq:flip_state_A} is the human agent being uncertain with probability $p_k$ and ability to takeover the system irrespective of the autonomous agent's actions.
Such a model captures the current design of shared control system where the human agent has higher authority over autonomous agents. 

Takeovers are mutually exclusive, i.e., only one agent is in control of the system at any given time. The continuous state $x_{k+1}$ at time $k+1$ is dependent on $\alpha_{k+1}$. 
In this work, we aim to solve for a takeover strategy for both the human and autonomous agent. Given a non-zero initial state $x_{1}$, we pose the takeover problem as an identical interest dynamic game described by the dynamics~\eqref{eq:human_dynamics},~\eqref{eq:autonomous_dynamics},~\eqref{eq:flip_state_H} and~\eqref{eq:flip_state_A} over a finite-horizon $L$, where both the human and autonomous agent aim to minimize a net cost given by:
\begin{equation}\label{eq:obj_def_alpha}
	\begin{aligned}
		J(x_{1}, \alpha_{1}, \{\pi^{\text{H}}_{\mathbf{L}}\}, \{\pi^{\text{A}}_{\mathbf{L}}\}) & = \mathbb{E}_{\mathbf{p}}\left[g_{L+1}(x_{L+1}, \alpha_{L+1}) + \sum_{t=1}^{L} g_t(x_t, \alpha_t) \right. \\ & \left. + \pi_{t}^{\text{H}|\alpha_{t}}h_t(x_t) + \pi^{\text{A}|\alpha_{t}}_{t}a_t(x_t) \right], 
	\end{aligned}
\end{equation}
where the expectation is over the stochastic authority transitions induced by the override probability sequence $\mathbf{p} := \{p_1, \dots, p_{L}\}$. The takeover sequence by both the human and autonomous agent is represented as $\{\pi_{\mathbf{L}}^j\} := \{\pi_1^{j|\alpha_{1}}, \dots, \pi_{L}^{j|\alpha_{L}}\}$ for $j \in \{\text{H},\text{A}\}$.
The state cost $g_t(x_t, \alpha_t): \mathbb{R}^{n} \times \{\text{H},\text{A}\} \rightarrow \mathbb{R}$ captures control performance under each agent (e.g., tracking error, energy), while $h_t(x_t), a_t(x_t): \mathbb{R}^{n} \rightarrow \mathbb{R}$ are takeover costs representing cognitive load, attention switching, or transition risk for human and autonomy respectively. Asymmetric costs $g^{\text{H}}_t \neq g^{\text{A}}_t$ and $h_t \neq a_t$ encode differences in control effectiveness and takeover burden without requiring separate utility functions.
We term the dynamic game~\eqref{eq:obj_def_alpha} between the human and autonomous agent as \emph{\texttt{Flip-Team}}, where both agents jointly optimize the takeover strategy.
In particular, we consider a subclass of games known as identical interest games~\cite{hespanha2017noncooperative}, where both agents are aligned in minimizing a common cost function~\eqref{eq:obj_def_alpha}. 

Over finite-horizon $L$, the optimal joint policy $(\pi_{\mathbf{L}}^{\text{H}*}, \pi_{\mathbf{L}}^{\text{A}*})$ minimizes the total cost:
\begin{equation*}
    (\pi_{\mathbf{L}}^{\text{H}*}, \pi_{\mathbf{L}}^{\text{A}*}) = \arg\min_{\pi_{\mathbf{L}}^{\text{H}}, \pi_{\mathbf{L}}^{\text{A}}} J(x_1, \alpha_1, \pi_{\mathbf{L}}^{\text{H}}, \pi_{\mathbf{L}}^{\text{A}}).
\end{equation*}
This team-optimal solution is also a Nash equilibrium: neither agent can reduce cost by unilaterally deviating from the joint optimum~\cite{hespanha2017noncooperative}. 

\begin{remark}[Relation to MDPs]\label{rem:mdp}
Since both agents minimize a common cost, the \texttt{Flip-Team} game is strategically equivalent to a multi-agent MDP.
We retain the game-theoretic formulation because (i) it yields closed-form switching thresholds (Section~\ref{sec:FlipCoop_General_Systems}, Theorem~\ref{th:NE_Val_gen_FDC}) that provide interpretable design criteria, and (ii) it naturally extends to settings with information asymmetry.
\end{remark}
In the next section, we derive the optimal switching policies and the conditions under which authority transfers occur.

\section{\texttt{Flip-Team} for general systems}\label{sec:FlipCoop_General_Systems}
We will begin by deriving the optimal switching policies of the \texttt{Flip-Team} game for any given control policy pair  $u_{\mathbf{L}}, w_{\mathbf{L}}$.
Our approach begins by defining the value function.
\subsection{Value Function}
Given a \texttt{FlipDyn} state at time $k \in \mathcal{K}$, the value comprises an instantaneous state cost and an additive cost-to-go based on the players takeover actions.
The cost-to-go is determined via a cost-to-go matrix in each \texttt{FlipDyn} state $\alpha_{k} = \text{H}$ and $\alpha_{k} = \text{A}$, represented by $\Xi_{k+1}^{\text{H}} \in \mathbb{R}^{2 \times 2}$ and $\Xi_{k+1}^{\text{A}} \in \mathbb{R}^{2 \times 2}$, respectively.
Let $V^{\text{H}}_k(x, \Xi_{k+1}^{\text{H}})$ and $V^{\text{A}}_k(x, \Xi_{k+1}^{\text{A}})$ be the value functions of the continuous state $x$ and cost-to-go matrices for $\alpha_k = \text{H}$ and $\alpha_k = \text{A}$, respectively. The entries of the cost-to-go matrix $\Xi^{\text{H}}_{k+1}$ corresponding to each pair of takeover actions are given by:
\begin{equation}\label{eq:Cost_to_go_H}
    \begin{aligned}
		& \begin{matrix} & \hphantom{000} \text{Idle} & & \hphantom{v_{k+1}^0(.,.} \text{Takeover}\end{matrix} \\
		\begin{matrix} \text{Idle} \\[5pt] \text{Request to}\\\text{takeover} \end{matrix} & \underbrace{\begin{bmatrix}
			v_{k+1}^{\text{H}} &  v_{k+1}^{\text{H}} + a_k(x)  \\[8pt]
			v_{k+1}^{\text{H}} + h_k(x) &  v_{k+1}^{\text{A}} + h_k(x) + a_k(x) \\[5pt] 
		\end{bmatrix}}_{\Xi_{k+1}^{\text{H}}}
	\end{aligned},
\end{equation}
\begin{align}
    \label{eq:V_k_0} \text{where } \ & v_{k+1}^{\text{H}} := V_{k+1}^{\text{H}}\left(F_k^{\text{H}}(x,u_k),\Xi_{k+2}^{\text{H}}\right), \\
    \label{eq:V_k_1} & v_{k+1}^{\text{A}} := V_{k+1}^{\text{A}}(F_k^{\text{A}}(x,w_k),\Xi_{k+2}^{\text{A}}).
\end{align}
The row and column entries of $\Xi^{\text{H}}_{k+1}$ are based on the player's actions, described by~\eqref{eq:flip_state_H},~\eqref{eq:flip_state_A} and its associated dynamics~\eqref{eq:human_dynamics},~\eqref{eq:autonomous_dynamics}.
We will refer $X(i,j)$ as the $(i,j)$-th entry of the matrix $X$. The first row entries $\Xi_{k+1}^{\text{H}}(1,1)$ and $\Xi_{k+1}^{\text{H}}(1,2)$ correspond to the human agent remaining idle, which prevents the autonomous agent to takeover despite the column action of takeover. The second row entries $\Xi_{k+1}^{\text{H}}(2,1)$ and $\Xi_{k+1}^{\text{H}}(2,2)$ correspond to action of request to takeover, and transition to the autonomous agent ($v^{\text{A}}_{k+1}$) only when the autonomous agent is ready (column action of takeover).
The entries of $\Xi_{k+1}^{\text{H}}$ couple the  value in each \texttt{FlipDyn} state. 
At time $k$ for a given human agent control policy $u_k$, state $x$ and $\alpha_k=\text{H}$, the value function satisfies:
\begin{equation}
    \label{eq:V_k^0_cost_to_go}
	V^{\text{H}}_k(x, \Xi_{k+1}^{\text{H}}) = g_k(x,\text{H})  + \Val(\Xi^{\text{H}}_{k+1}), 
\end{equation}
where $\Val(X_{k+1}^{\alpha_{k}}):= \min_{y_{k}^{\alpha_{k}}} \min_{z_{k}^{\alpha_{k}}} y_{k}^{{\alpha_{k}}^{\tp}}X_{k+1}z_{k}^{\alpha_{k}}$ represents the  optimal value of the cost-to-go matrix $X_{k+1}$ for the \texttt{FlipDyn} state $\alpha_{k}$, and $\Xi^{0}_{k+1} \in \mathbb{R}^{2 \times 2}$ is the cost-to-go matrix.
Since both agents minimize a common cost, the optimal action pair selects the entry of $\Xi^{\text{H}}_{k+1}$ with minimum cost-to-go from state $x$ at time $k$.

Similarly, for $\alpha_k = \text{A}, \forall k$, the cost-to-go matrix entries $\Xi_{k+1}^{\text{A}}$ are:
\begin{equation}\label{eq:Cost_to_go_A}
    \begin{aligned}
		& \begin{matrix} & \hphantom{0000} \text{Idle} & & \hphantom{v_{k+1}^0000} \text{Request to takeover}\end{matrix} \\
		\begin{matrix} \text{Idle} \\[10pt] \text{Takeover}\end{matrix} & \underbrace{\begin{bmatrix}
			v_{k+1}^{\text{A}} \hphantom{000}  &  \hphantom{00} v_{k+1}^{\text{A}} + a_k(x) \hphantom{0}  \\[5pt]
			\begin{matrix}
                    p_{k}v_{k+1}^{\text{H}} + (1-p_{k})v_{k+1}^{\text{A}} \\
                        + h_k(x)       
                \end{matrix}
              &  \begin{matrix}
                    v_{k+1}^{\text{H}} + h_k(x) \\
                        + a_k(x)       
                \end{matrix}
		\end{bmatrix}}_{\Xi_{k+1}^{\text{A}}}.
	\end{aligned}
\end{equation}
The first row corresponds to autonomy retaining control (human idle). Entry $\Xi_{k+1}^{\text{A}}(2,1)$ reflects stochastic human override with probability $p_k$; entry $\Xi_{k+1}^{\text{A}}(2,2)$ reflects deterministic human takeover by mutual agreement.
Analogous to~\eqref{eq:V_k^0_cost_to_go} the  value for $\alpha_{k} = \text{A}$ satisfies:
\begin{equation}
    \label{eq:V_k^1_cost_to_go}
	\text{with} \ V^{\text{A}}_k(x, \Xi_{k+1}^{\text{A}}) = g_k(x,\text{A}) +  \Val(\Xi^{\text{A}}_{k+1}).
\end{equation}

With the  values established in each of the \texttt{FlipDyn} states, in the following subsection, we will characterize the optimal switching policies and value functions over the horizon $L$.

\subsection{Optimal switching policy}
In order to characterize the optimal value, we restrict the cost functions to belong to a finite domain, stated in the following assumption.
\begin{assumption}\label{ast:general_costs}
    [Non-negative costs] At any time instant $k \in \mathcal{K}$, the state and takeover costs $g_k(x,\alpha), h_k(x), a_k(x)$, for all $x \in \mathbb{R}^{n},$ and $\alpha \in \{\text{H},\text{A}\}$ are non-negative $(\mathbb{R}_{\geq 0})$. 
\end{assumption}

Assumption~\ref{ast:general_costs} ensures that cost comparisons within the cost-to-go matrix are sign-consistent, allowing for a clear characterization of optimal policies.
Under Assumption~\ref{ast:general_costs}, we derive the  value and takeover policies for the finite time-horizon in the following result.

\begin{theorem}\label{th:NE_Val_gen_FDC}
    (\textbf{Case $\alpha_k = \text{H}$}) Under Assumption~\ref{ast:general_costs} and known control policies, $u_{\mathbf{L}}$ and $w_{\mathbf{L}}$, the optimal switching policies of the \texttt{Flip-Team} game~\eqref{eq:obj_def_alpha} at time $k \in \mathcal{K}$, subject to the continuous state dynamics~\eqref{eq:human_dynamics},~\eqref{eq:autonomous_dynamics} and \texttt{FlipDyn} dynamics~\eqref{eq:flip_state_H},~\eqref{eq:flip_state_A}, are given by:
    \begin{align}
    \begin{split}\label{eq:TP_gen_H}
            \{\pi^{\text{H}*|\text{H}}_{k}, \pi^{\text{A}*|\text{H}}_{k}\} = \begin{cases}
            \{0,0\} (\text{retain}), & \text{if }  
                \begin{matrix}
                    \tilde{v}_{k+1} + h_{k}(x) \\ + a_{k}(x)
                \end{matrix} > 0,
             \\[5 pt]
            \{1,1\} (\text{handoff}), & \text{otherwise}.
            \end{cases} 
    \end{split}
    \end{align}
    
    The  value is given by:
    \begin{align}\label{eq:Val_gen_H}
        v_{k}^{\text{H}} = 
        \begin{cases}
            \begin{aligned}
				& g_k(x,\text{H}) + v_{k+1}^{\text{H}},
            \end{aligned} &\text{if } \tilde{v}_{k+1} + h_{k}(x) + a_{k}(x) > 0, \\
            \begin{aligned}
				& g_k(x,\text{H}) + v_{k+1}^{\text{A}} \\ & + h_{k}(x) + a_{k}(x),
            \end{aligned} &\text{otherwise},
		\end{cases} 
	\end{align}
    where $\tilde{v}_{k+1} := v_{k+1}^{\text{A}} - v_{k+1}^{\text{H}}$.
    
    (\textbf{Case $\alpha_k = \text{A}$}) The optimal switching policies are 
    \begin{align}
    \begin{split}\label{eq:TP_gen_A}
            \{\pi_{k}^{\text{H}*|\text{A}},\pi_{k}^{\text{A}*|\text{A}}\}  = \begin{cases}
            \{0,0\} (\text{retain}), & \text{if } \ 
                \begin{matrix}
                    \tilde{v}_{k+1} < \dfrac{h_{k}(x)}{p_{k}}
                \end{matrix}, \\[10pt]
            \{1,0\} (\text{overrride}), & \text{if } \ 
                \begin{matrix}
                    \tilde{v}_{k+1} \leq \dfrac{a_{k}(x)}{1 - p_{k}}  \\
                    \tilde{v}_{k+1} \geq \dfrac{h_{k}(x)}{p_{k}},
                \end{matrix} \\
            \{1,1\} (\text{handoff}), & \text{otherwise.}
            \end{cases} 
    \end{split}
    \end{align}
    The  value is given by:
    \begin{align}\label{eq:Val_gen_A}
        v_{k}^{\text{A}} = 
        \begin{cases}
            \begin{aligned}
				& g_k(x,\text{A}) + v_{k+1}^{\text{A}},
            \end{aligned} &\text{if } \begin{matrix}
                    \tilde{v}_{k+1} < \dfrac{h_{k}(x)}{p_{k}}
                \end{matrix}, \\[10pt]
            \begin{aligned}
				& g_k(x,\text{A}) + p_{k}v^{\text{H}}_{k+1} + \\[5pt] & (1 - p_{k})v^{\text{A}}_{k+1} + h_k(x),
            \end{aligned} &\text{if } \begin{matrix}
                    \tilde{v}_{k+1} \leq \dfrac{a_{k}(x)}{1 - p_{k}}  \\
                    \tilde{v}_{k+1} \geq \dfrac{h_{k}(x)}{p_{k}},
                \end{matrix} \\
            \begin{aligned}
                g_k(x,\text{A}) + v_{k+1}^{\text{H}} + \\
                h_{k}(x) + a_{k}(x),
            \end{aligned}
             & \text{otherwise}.
		\end{cases} 
	\end{align}
    The boundary condition at $k = L$ is given by:
    \begin{gather}\label{eq:b_cond_NE_Val_gen_FDC}
        \Xi_{L+2}^{\text{H}} := \mathbf{0}_{2 \times 2}, \ \Xi_{L+2}^{\text{A}} := \mathbf{0}_{2 \times 2}, 
    \end{gather}
    where $\mathbf{0}_{i \times j} \in \mathbb{R}^{i \times j}$ represents a matrix of zeros.
\end{theorem}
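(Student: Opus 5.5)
The plan is backward induction on $k$, starting from the boundary condition~\eqref{eq:b_cond_NE_Val_gen_FDC}. With $\Xi^{\text{H}}_{L+2}=\Xi^{\text{A}}_{L+2}=\mathbf{0}_{2\times 2}$ we have $\Val(\cdot)=0$, so~\eqref{eq:V_k^0_cost_to_go} and~\eqref{eq:V_k^1_cost_to_go} give $V^{\alpha}_{L+1}(x)=g_{L+1}(x,\alpha)$. This is exactly the terminal term of~\eqref{eq:obj_def_alpha}. For the inductive step, fix $k$ and assume $v^{\text{H}}_{k+1}$ and $v^{\text{A}}_{k+1}$ are well defined, since $u_k,w_k$ are known. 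Then the entries of $\Xi^{\text{H}}_{k+1}$ and $\Xi^{\text{A}}_{k+1}$ in~\eqref{eq:Cost_to_go_H} and~\eqref{eq:Cost_to_go_A} are fixed real numbers. The expectation over $\mathbf{p}$ enters only through entry $\Xi^{\text{A}}_{k+1}(2,1)$, by linearity of expectation and the Markov structure of~\eqref{eq:flip_state_A}. By the principle of optimality, the value at time $k$ is $g_k$ plus $\Val$ of the relevant matrix.

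The first substantive step is to show that the policies can be taken pure. The objective $y^{\tp}Xz$ is bilinear in the two mixed strategies on the simplex, and both players minimize the same quantity. The joint minimum of a bilinear function over a product of simplices is attained at a pair of vertices. Hence $\Val(X)=\min_{i,j}X(i,j)$, and an optimal pure pair is any minimizing entry. Everything then reduces to comparing the four entries of each matrix.

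\emph{Case $\alpha_k=\text{H}$.} By Assumption~\ref{ast:general_costs}, $a_k,h_k\ge 0$, so entries $(1,2)$ and $(2,1)$ are at least entry $(1,1)$ and are dominated. The minimum is therefore $\min\{v^{\text{H}}_{k+1},\,v^{\text{A}}_{k+1}+h_k+a_k\}$. Retain is strictly better exactly when $\tilde v_{k+1}+h_k+a_k>0$, with ties assigned to handoff. This gives~\eqref{eq:TP_gen_H} and~\eqref{eq:Val_gen_H}.

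\emph{Case $\alpha_k=\text{A}$.} Again $(1,2)$ is dominated by $(1,1)$ since $a_k\ge 0$. The three remaining candidates are retain $v^{\text{A}}_{k+1}$, override $p_kv^{\text{H}}_{k+1}+(1-p_k)v^{\text{A}}_{k+1}+h_k$, and handoff $v^{\text{H}}_{k+1}+h_k+a_k$. The pairwise comparisons are:
\begin{itemize}
\item retain beats override iff $p_k\tilde v_{k+1}<h_k$;
\item override is no worse than handoff iff $(1-p_k)\tilde v_{k+1}\le a_k$;
\item retain beats handoff iff $\tilde v_{k+1}<h_k+a_k$.
\end{itemize}
Substituting each winning entry into~\eqref{eq:V_k^1_cost_to_go} yields~\eqref{eq:Val_gen_A}. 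Repeating the argument for $k=L,L-1,\dots,1$ completes the induction.

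The main obstacle is showing that these three pairwise tests assemble consistently into the three regions of~\eqref{eq:TP_gen_A}. The key identity I would use is
\[
h_k+a_k=p_k\frac{h_k}{p_k}+(1-p_k)\frac{a_k}{1-p_k},
\]
so $h_k+a_k$ always lies between $h_k/p_k$ and $a_k/(1-p_k)$. If $h_k/p_k\le a_k/(1-p_k)$, the order is $h_k/p_k\le h_k+a_k\le a_k/(1-p_k)$. Then $\tilde v_{k+1}<h_k/p_k$ implies retain beats both alternatives, the middle band selects override, and $\tilde v_{k+1}>a_k/(1-p_k)$ selects handoff, exactly as stated. If instead $h_k/p_k>a_k/(1-p_k)$, the override band is empty, and the retain/handoff switch occurs at $h_k+a_k$ rather than at $h_k/p_k$. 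I expect to have to treat this sub-case explicitly: either note that the first condition must be read together with $\tilde v_{k+1}<h_k+a_k$, or add the ordering $h_k/p_k\le a_k/(1-p_k)$ as a standing hypothesis. Checking this ordering is the step I would do first.
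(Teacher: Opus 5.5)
Your analysis is correct, and the sub-case you flag is not an obstacle in your proof but a genuine error in the statement. Whenever $h_k/p_k > a_k/(1-p_k)$ and $h_k + a_k < \tilde v_{k+1} < h_k/p_k$, the handoff entry is strictly smaller than the retain entry, yet the theorem prescribes retain.

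Here is a concrete instance at $k=L$ that satisfies Assumption~\ref{ast:general_costs}. Take constant terminal costs $g_{L+1}(\cdot,\text{H}) = 1$ and $g_{L+1}(\cdot,\text{A}) = 1.7$, with $h_L = 0.35$, $a_L = 0.2$, $p_L = 0.4$ (the paper's own scalar parameters). The entries of $\Xi^{\text{A}}_{L+1}$ are $1.7$, $1.9$, $1.77$, $1.55$ in the order $(1,1),(1,2),(2,1),(2,2)$. So $\Val(\Xi^{\text{A}}_{L+1}) = 1.55$, which is handoff. The theorem instead has $\tilde v_{L+1} = 0.7 < h_L/p_L = 0.875$, and~\eqref{eq:Val_gen_A} returns $g_L(x,\text{A}) + 1.7$.

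The literal statement therefore cannot be proved, and you should commit to an amendment rather than leave it as an either/or. The retain condition must read $\tilde v_{k+1} < \min\{h_k/p_k,\, h_k + a_k\}$, with the same change in the first branch of~\eqref{eq:Val_gen_A}; override and handoff stay as stated. This coincides with the theorem exactly when $h_k(1-p_k) \le a_k p_k$. Prefer the amended condition to adding that ordering as a hypothesis: the paper's own runs with $p = 0.4$ and $p = 0.55$ lie below $p^* = H/(H+A)$, i.e., in the excluded regime.

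Your route also differs from the paper's in exactly the way that matters. The paper examines three candidate pure profiles and checks only unilateral deviations:
retain against override (and the dominated $(1,2)$ entry); override against retain and handoff; handoff against override only.
Retain and handoff are never compared with each other. Those are pure-Nash conditions for the stage matrix game, and in an identical-interest game a pure Nash equilibrium need not be team-optimal. On $(h_k + a_k,\, h_k/p_k)$ both retain and handoff are Nash equilibria, and the paper's case ordering selects the costlier one.

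Your argument is what team-optimality requires, and it is why you caught the error. You use $\Val(X) = \min_{i,j} X(i,j)$ via the vertex property of the bilinear form. You then make all pairwise comparisons among the undominated entries, tied together by $h_k + a_k = p_k (h_k/p_k) + (1-p_k)\bigl(a_k/(1-p_k)\bigr)$.

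The backward induction, the boundary condition, and the $\alpha_k = \text{H}$ case agree with the paper and are fine. The same correction carries over to the $\alpha_k = \text{A}$ branch of Corollary~\ref{cor:NE_Val_FDC_H}.
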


\begin{proof}
    We will only derive the optimal switching policies and  value for the case of $\alpha_k = \text{A}$. We leave out the derivations for $\alpha = \text{H}$ as they are analogous to $\alpha = \text{A}$. There are three candidate optimal action pairs to consider for the $2 \times 2$ cost-to-go matrix game defined by the matrix in~\eqref{eq:Cost_to_go_A}.
    
    i) \underline{Pure strategy - \{Idle, Idle\}}:
    Both the human and autonomous agent choose the action of staying idle. 
    \noindent We determine the conditions under which such a pure policy is feasible. Under Assumption~\ref{ast:general_costs}, we compare the entries of $\Xi_{k+1}^{\text{A}}$ when the human agent opts to remain idle to obtain the condition: 
    \begin{equation}
        \begin{aligned}
            v_{k+1}^{\text{A}} < v_{k+1}^{\text{A}} + a_{k}(x),
        \end{aligned}
    \end{equation}
    which indicates that the autonomous agent always chooses to play idle. Next, we compare the entry of $ v_{k+1}^{\text{A}}$ against a human agent takeover action to obtain:
    \begin{equation*}
        \begin{aligned}
            v_{k+1}^{\text{A}} & \leq p_{k}v_{k+1}^{\text{H}} + (1 - p_{k})v_{k+1}^{\text{A}} + h_{k}(x), 
        \end{aligned}
    \end{equation*}
    \begin{equation*}
        \begin{aligned}
         \Rightarrow & v_{k+1}^{\text{A}} - v_{k+1}^{\text{H}} \leq \dfrac{h_k(x)}{p_{k}}.
        \end{aligned}
    \end{equation*}
    This policy is optimal when the condition is satisfied.
    The value, corresponding to the pure strategy of both agents remaining idle, is the entry $\Xi_{k+1}^{\text{A}}(1,1)$, given by:
    \begin{equation*}
        V_{k}^{\text{A}}(x,\Xi_{k+1}^{\text{A}}) = g_k(x,\text{A}) + v_{k+1}^{\text{A}}.
    \end{equation*}
    
    ii) \underline{Pure strategy - \{Takeover, Idle\}}:
    The autonomous agent chooses to stay idle whereas the human agent chooses to takeover. 
    \noindent To realize such a pure strategy, we compare the entry $\Xi_{k+1}^{\text{A}}(2,1)$ against $\Xi_{k+1}^{\text{A}}(1,1)$ to obtain the condition:
    \begin{equation*}
        \begin{aligned}
            v_{k+1}^{\text{A}} - v_{k+1}^{\text{H}} \geq \dfrac{h_{k}(x)}{p_{k}}.
        \end{aligned}
    \end{equation*}
    Similarly, comparing $\Xi_{k+1}^{\text{A}}(2,1)$ against $\Xi_{k+1}^{\text{A}}(2,2)$ yields:
    \begin{equation*}
        \begin{aligned}
            p_{k}v_{k+1}^{\text{H}} + (1 - p_{k})v_{k+1}^{\text{A}} + h_{k}(x) & \leq v_{k+1}^{\text{H}} + h_{k}(x) + a_{k}(x), \\
            \Rightarrow v_{k+1}^{\text{A}} - v_{k+1}^{\text{H}} & \leq \dfrac{a_{k}(x)}{1 - p_{k}}.
        \end{aligned}
    \end{equation*}
    If the conditions derived for the strategy are satisfied, then such a strategy corresponds to an optimal policy. 
    The expected value corresponds to the entry $\Xi_{k+1}^{\text{A}}(1,2)$, given by:
    \begin{equation*}
        V_{k}^{\text{A}}(x,\Xi_{k+1}^{\text{A}}) = g_k(x,\text{A}) + p_{k}v_{k+1}^{\text{H}} + (1 - p_{k})v_{k+1}^{\text{A}} + h_{k}(x).
    \end{equation*}

    iii) \underline{Pure strategy - \{Takeover, Request to takeover\}}: The human agent takes over and the autonomous agent requests to takeover. Following the same process, we compare the entry $\Xi_{k+1}^{\text{A}}(2,2)$ against $\Xi_{k+1}^{\text{A}}(2,1)$ to obtain the condition:
    \begin{equation*}
        v_{k+1}^{\text{A}} - v_{k+1}^{\text{H}} > \frac{a_{k}(x)}{1 - p_{k}}.
    \end{equation*}
    Notice, we didn't compare $\Xi_{k+1}^{\text{A}}(2,2)$ to $\Xi_{k+1}^{\text{A}}(1,2)$, because the entry $\Xi_{k+1}^{\text{A}}(1,2)$ is non-optimal.
    In other words, there is no incentive for either agent to select a policy which corresponds to the entry $\Xi_{k+1}^{\text{A}}(1,2)$, since $\Xi_{k+1}^{\text{A}}(1,1)$ is strictly better than $\Xi_{k+1}^{\text{A}}(1,2)$. 
    The  value for the derived policy is:
    \begin{equation*}
        V_{k}^{\text{A}}(x,\Xi_{k+1}^{\text{A}}) = g_{k}(x, \text{A}) + v_{k+1}^{\text{H}} + h_{k}(x) + a_{k}(x).
    \end{equation*}
    Collecting the optimal values for each derived policy, we obtain the value update equation over the  horizon of $L$ in~\eqref{eq:Val_gen_A}. The boundary conditions~\eqref{eq:b_cond_NE_Val_gen_FDC} imply that the  values at $k = L+1$ satisfy
    \begin{equation*}
        \begin{aligned}
            V_{L+1}^{\text{H}}(x,\mathbf{0}_{2 \times 2}) & = g_{L+1}^{\text{H}}(x,\text{H}), \\ V_{L+1}^{\text{A}}(x,\mathbf{0}_{2 \times 2}) & = g_{L+1}^{\text{A}}(x,\text{A}).
        \end{aligned}
    \end{equation*}
\end{proof}

\begin{remark}
The switching thresholds in~\eqref{eq:TP_gen_H} and~\eqref{eq:TP_gen_A} compare the cost-to-go difference $v^{\text{A}}_{k+1} - v^{\text{H}}_{k+1}$ against takeover penalties scaled by the override probability $p_k$. A takeover occurs only when this difference exceeds the threshold—not when the  values cross. The stochastic term $p_k$ modulates the threshold: higher $p_k$ (more reliable human override) lowers the barrier for human takeover.
\end{remark}
For a finite cardinality of the state space $\mathcal{X}$, fixed player policies $u_k$ and $w_k,  k \in \mathcal{K}$, and a finite-horizon $L$, Theorem~\ref{th:NE_Val_gen_FDC} yields the optimal value  of the \texttt{Flip-Team} game~\eqref{eq:obj_def_alpha}. However, the computational and storage complexities  scale undesirably with the cardinality of $\mathcal{X}$, especially in continuous state spaces. 
To mitigate such computational challenges, in the next section we will provide a parametric form of the  value for linear dynamics with quadratic costs.


\section{\texttt{Flip-Team} for LQ Problems}\label{sec:FlipCoop_Linear_Systems}
In this section, we restrict our attention to a linear dynamical system with quadratic costs (LQ problems). The dynamics of a linear system at time instant $k \in \mathcal{K}$, controlled by the human agent, satisfies:
\begin{equation}
    \label{eq:H_control_dynamics}
    \begin{aligned}
        x_{k+1} & = F_{k}^{\text{H}}(x_k, u_k) := E_kx_k + B_ku_k,
    \end{aligned}
\end{equation}
where $E_{k} \in \mathbb{R}^{n \times n}$ denotes the state transition matrix, while $B_{k} \in \mathbb{R}^{n \times m}$ represents the human agent control matrix. Similarly, the dynamics of the linear system when the autonomous agent takes over satisfies:
\begin{equation}
    \label{eq:A_control_dynamics}
    \begin{aligned}
        x_{k+1} & = F_{k}^{\text{A}}(x_k, w_k) := E_kx_k + C_kw_k,
    \end{aligned}
\end{equation}
where $C_{k} \in \mathbb{R}^{n \times p}$ signifies the autonomous agent control matrix.
The compact dynamics with the \texttt{FlipDyn} state  is given by:
\begin{align}\label{eq:linear_dynamics}
	x_{k+1} = E_{k}x_{k} + \mathbf{1}_{\text{H}}(\alpha_{k+1})B_{k}u_{k} + \mathbf{1}_{\text{A}}(\alpha_{k+1})C_{k}w_{k},
\end{align}
where $\mathbf{1}_{\mathcal{A}}: \alpha_{k+1} \to \{0,1\}$ is an indicator function, which maps to one if $\alpha_{k+1} = \mathcal{A}$ and zero otherwise.
The stage and takeover quadratic costs are:
\begin{gather}\label{eq:cost_quad}
        g_k(x,\alpha_k) = x^{\tp}G_k^{\alpha_k}x, \
        h_k(x) = x^{\tp}H_kx, \ a_k(x) = x^{\tp}A_kx, 
\end{gather}
where $G_k^{\alpha_k} \in \mathbb{S}^{n \times n}_{+}, H_k \in \mathbb{S}^{n \times n}_{+}, A_k \in \mathbb{S}^{n \times n}_{+}$ are positive definite matrices.

If we constrain the control policies of both agents to function of the continuous state $x$, then the value for each \texttt{FlipDyn} state can be represented purely as a function of the continuous state $x$ and \texttt{FlipDyn} state, in contrast to being contingent on both continuous state $x$ and the control input for each \texttt{FlipDyn} state.
This restriction is formally stated in the subsequent assumption.

\begin{assumption}\label{ast:linear_control_space}
    We restrict the control policies to be linear state-feedback in the continuous state $x$, described as:
    \begin{equation}\label{eq:linear_FD_control}
        u_k(x) := K_kx, \quad w_k(x) := W_kx,
    \end{equation}
    where $K_k \in \mathbb{R}^{m \times n}$ and $W_k \in \mathbb{R}^{p \times n}$ are human and autonomous agent control gains matrices, respectively. 
\end{assumption}

Under Assumption~\ref{ast:linear_control_space}, the human and autonomous agent dynamics can be compactly written as:
\begin{subequations}\label{eq:dynamics_HA_compact}
    \begin{align}
        x_{k+1} = \tilde{B}_{k}x_{k} := (E_{k} + B_{k}K_{k})x_{k}, \\
        x_{k+1} = \tilde{C}_{k}x_{k} := (E_{k} + C_{k}W_{k})x_{k}.
    \end{align}
\end{subequations}
Given the linear dynamics~\eqref{eq:dynamics_HA_compact}, we postulate a parametric form for the value function in each \texttt{FlipDyn} state as follows:
\begin{equation}
    \label{eq:para_form_al_QC}
    \begin{aligned}
        & V_{k}^{\text{H}}(x,\Xi_{k+1}^{\text{H}}) \Rightarrow V_{k}^{\text{H}}(x) := x^{\tp}P^{\text{H}}_kx, \\
        & V_{k}^{\text{A}}(x, \Xi_{k+1}^{\text{A}}) \Rightarrow V_{k}^{\text{A}}(x) := x^{\tp}P^{\text{A}}_kx,
    \end{aligned}
\end{equation}
where $P^{\text{H}}_{k} \in \mathbb{S}_{+}^{n \times n}$ and $P^{\text{A}}_{k}\in \mathbb{S}_{+}^{n \times n}$ correspond to the \texttt{FlipDyn} states $\alpha = \text{H}$ and $\text{A}$, respectively. 
We impose Assumption~\ref{ast:linear_control_space} to enable factoring out the state $x$  while computing the value function update backward in time. Next, we present the optimal switching policies and value function recursions for each \texttt{FlipDyn} state.

\begin{cor}\label{cor:NE_Val_FDC_H}
    (\textbf{Case $\alpha_k = \text{H}$}) The optimal switching policies of the \texttt{Flip-Team} game~\eqref{eq:obj_def_alpha} for every $k \in \mathcal{K}$, subject to the dynamics~\eqref{eq:dynamics_HA_compact}, with quadratic and takeover costs~\eqref{eq:cost_quad} and \texttt{FlipDyn} dynamics~\eqref{eq:flip_state_H},~\eqref{eq:flip_state_A} are given by:
    \begin{align}\label{eq:TP_quadcost_H}
            \{\pi^{\text{H}*|\text{H}}_{k}, \pi^{\text{A}*|\text{H}}_{k}\} = \begin{cases}
            \{0,0\}, & \text{if } x^{\tp}\left(\tilde{P}_{k+1} + H_{k} + A_{k}\right)x > 0,
            \\[3 pt]
            \{1,1\}, & \text{otherwise}
            \end{cases} 
    \end{align}
    The value is given by:
    \begin{align}\label{eq:Val_quadcost_H}
        P_{k}^{\text{H}} = 
        \begin{cases}
            \begin{aligned}
				& G_k^{\text{H}} + \tilde{B}_{k}^{\tp}P_{k+1}^{\text{H}}\tilde{B}_{k},
            \end{aligned} &\text{if } \tilde{P}_{k+1} + H_{k} + A_{k} \succ \mathbf{0}_{n}, \\[5pt]
            \begin{aligned}
				& G_k^{\text{H}} + \tilde{C}_{k}^{\tp}P_{k+1}^{\text{A}}\tilde{C}_{k} \\ & + H_{k} + A_{k},
            \end{aligned} &\text{otherwise},
		\end{cases} 
    \end{align}
    where $\tilde{P}_{k+1} := \tilde{C}_{k}^{\tp}P_{k+1}^{\text{A}}\tilde{C}_{k} - \tilde{B}_{k}^{\tp}P_{k+1}^{\text{H}}\tilde{B}_{k}$.
    
    \noindent (\textbf{Case $\alpha_k = \text{A}$}) The optimal switching policies are given by:
    \begin{align}
    \begin{split}\label{eq:TP_quadcost_A}
            \{\pi_{k}^{\text{H}*|\text{A}},\pi_{k}^{\text{A}*|\text{A}}\}  = \begin{cases}
            \{0,0\}, & \text{if } \ 
                \begin{matrix}
                    x^{\tp}(\tilde{P}_{k+1})x < \dfrac{x^{\tp}H_{k}x}{p_{k}}
                \end{matrix}, \\[8pt]
            \{1,0\}, & \text{if } \ 
                \begin{matrix}
                    x^{\tp}\tilde{P}_{k+1}x \leq \dfrac{x^{\tp}A_{k}x}{1 - p_{k}}  \\[5pt]
                    x^{\tp}\tilde{P}_{k+1}x \geq \dfrac{x^{\tp}H_{k}x}{p_{k}},
                \end{matrix} \\
            \{1,1\}, & \text{otherwise.}
            \end{cases} 
    \end{split}
    \end{align}
    The value is given by:
    \begin{align}\label{eq:Val_quadcost_A}
        P_{k}^{\text{A}} = 
        \begin{cases}
            \begin{aligned}
				& G_k^{\text{A}} + \tilde{C}_{k}^{\tp}P_{k+1}^{\text{A}}\tilde{C}_{k},
            \end{aligned} &\text{if } \begin{matrix}
                   \tilde{P}_{k+1} \prec \dfrac{H_{k}}{p_{k}}
                \end{matrix}, \\[8pt]
            \begin{aligned}
				& G_k^{\text{A}} + p_{k}\tilde{B}_{k}^{\tp}P^{\text{H}}_{k+1}\tilde{B}_{k} + H_{k}  \\[5pt] & + (1 - p_{k})\tilde{C}_{k}^{\tp}P^{\text{A}}_{k+1}\tilde{C}_{k},
            \end{aligned} &\text{if } \begin{matrix}
                    \tilde{P}_{k+1} \preceq \dfrac{A_{k}}{1 - p_{k}} \\
                    \tilde{P}_{k+1} \succeq \dfrac{H_{k}}{p_{k}},
                \end{matrix} \\
            \begin{aligned}
                & G_k^{\text{A}} + \tilde{B}_{k}^{\tp}P_{k+1}^{\text{H}}\tilde{B}_{k} + \\ & A_{k} + H_{k}
            \end{aligned}
             & \text{otherwise}.
		\end{cases} 
    \end{align}
    The terminal conditions for the recursions~\eqref{eq:Val_quadcost_H} and~\eqref{eq:Val_quadcost_A} are:
    \begin{equation*}
        P_{L+1}^{\text{H}} := G_{L+1}^{\text{H}}, \quad P_{L+1}^{\text{A}} := G_{L+1}^{\text{A}}.
    \end{equation*} 
\end{cor}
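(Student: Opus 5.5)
The plan is to obtain Corollary~\ref{cor:NE_Val_FDC_H} by substituting the LQ data into Theorem~\ref{th:NE_Val_gen_FDC} and closing the recursion by backward induction on $k$. Assumption~\ref{ast:general_costs} holds because $G_k^{\alpha}, H_k, A_k \in \mathbb{S}^{n\times n}_+$, so every quadratic cost in \eqref{eq:cost_quad} is nonnegative. The control policies are fixed by Assumption~\ref{ast:linear_control_space}, so Theorem~\ref{th:NE_Val_gen_FDC} applies pointwise in $x$.

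The induction hypothesis is the parametric form \eqref{eq:para_form_al_QC} at time $k+1$: $V_{k+1}^{\text{H}}(y)=y^{\tp}P_{k+1}^{\text{H}}y$ and $V_{k+1}^{\text{A}}(y)=y^{\tp}P_{k+1}^{\text{A}}y$.

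\emph{Base case.} With the boundary condition \eqref{eq:b_cond_NE_Val_gen_FDC}, the values at $L+1$ are just the terminal stage costs. This gives $P_{L+1}^{\alpha}=G_{L+1}^{\alpha}$.

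\emph{Inductive step.} By \eqref{eq:dynamics_HA_compact}, $v^{\text{H}}_{k+1}=x^{\tp}\tilde B_k^{\tp}P^{\text{H}}_{k+1}\tilde B_k x$ and $v^{\text{A}}_{k+1}=x^{\tp}\tilde C_k^{\tp}P^{\text{A}}_{k+1}\tilde C_k x$. Hence $\tilde v_{k+1}=x^{\tp}\tilde P_{k+1}x$, and the takeover costs are $x^{\tp}H_kx$ and $x^{\tp}A_kx$.

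Substituting these into \eqref{eq:TP_gen_H} and \eqref{eq:TP_gen_A} gives the switching rules \eqref{eq:TP_quadcost_H} and \eqref{eq:TP_quadcost_A} directly. Here $p_k\in(0,1)$ allows the thresholds to be rewritten as quadratic forms without sign changes.

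Substituting into \eqref{eq:Val_gen_H} and \eqref{eq:Val_gen_A} gives, in each branch, a value of the form $x^{\tp}M x$. The matrices $M$ are:
\begin{itemize}
\item in the $\alpha_k=\text{H}$ case, either $G_k^{\text{H}}+\tilde B_k^{\tp}P^{\text{H}}_{k+1}\tilde B_k$ or $G_k^{\text{H}}+\tilde C_k^{\tp}P^{\text{A}}_{k+1}\tilde C_k+H_k+A_k$;
\item in the $\alpha_k=\text{A}$ case, the three matrices listed in \eqref{eq:Val_quadcost_A}.
\end{itemize}
Each $M$ is a sum of PSD terms and congruences of PSD matrices, so it is symmetric PSD. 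This propagates the parametric form and the membership $P_k^{\alpha}\in\mathbb{S}_+^{n\times n}$ to time $k$.

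\emph{Main obstacle: branch selection.} The conditions in Theorem~\ref{th:NE_Val_gen_FDC} are scalar inequalities evaluated at a particular $x$. The matrix recursion, however, assigns a single branch through matrix inequalities such as $\tilde P_{k+1}+H_k+A_k\succ 0$ or $\tilde P_{k+1}\prec H_k/p_k$.

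If a matrix inequality holds, the corresponding scalar inequality holds for every $x\neq 0$. That branch is then optimal uniformly in $x$, and the value is exactly quadratic with the stated $P_k$. This is the sense in which the thresholds are independent of the continuous state: the ratio comparisons are homogeneous of degree two, so scaling $x$ does not change the decision. Only the direction of $x$ could matter, and the definiteness conditions remove that dependence.

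The point needing care is the ``otherwise'' branches, where the matrix conditions are indefinite. There I would argue as follows:
\begin{enumerate}
\item Adopt the convention, consistent with \eqref{eq:para_form_al_QC} and Assumption~\ref{ast:linear_control_space}, that the recursion is taken over the region where the ordering of the three candidate matrices is definite.
\item Note that for the scalar case $n=1$ this is automatic, since every inequality is a sign condition on a scalar multiplied by $x^2$.
\end{enumerate}
I would state this restriction explicitly, since it is what makes the value exactly quadratic rather than a pointwise minimum of quadratics. I would then conclude by induction down to $k=1$.
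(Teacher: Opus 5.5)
You follow the paper's route: substitute the quadratic ansatz, the closed-loop maps $\tilde B_k,\tilde C_k$ and the quadratic costs into Theorem~\ref{th:NE_Val_gen_FDC}, then induct backward from $P^{\alpha}_{L+1}=G^{\alpha}_{L+1}$. Your concern about indefinite comparisons is well founded and goes beyond the paper's outline.

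There is, however, a genuine gap in the step ``if a matrix inequality holds \dots\ that branch is then optimal uniformly in $x$.'' It fails for the retain branch when $\alpha_k=\text{A}$, and it fails even for $n=1$, where your definiteness restriction is automatic.

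The condition $\tilde v_{k+1}<h_k/p_k$ that you import from \eqref{eq:TP_gen_A} only excludes the \emph{unilateral} deviations from $\Xi^{\text{A}}_{k+1}(1,1)$ to $\Xi^{\text{A}}_{k+1}(2,1)$ and to $\Xi^{\text{A}}_{k+1}(1,2)$. In an identical-interest matrix game this certifies a Nash equilibrium, not the team optimum, because the joint switch to the handoff entry $\Xi^{\text{A}}_{k+1}(2,2)$ is never compared. Case i of the proof of Theorem~\ref{th:NE_Val_gen_FDC} has the same omission, so the paper's outline inherits it.

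Retain is team-optimal only if also $v^{\text{A}}_{k+1}\le v^{\text{H}}_{k+1}+h_k+a_k$, i.e.\ $\tilde P_{k+1}\preceq H_k+A_k$. Since $h+a=p\cdot\frac{h}{p}+(1-p)\cdot\frac{a}{1-p}$ lies between the two thresholds, this follows from $\tilde v<h/p$ only when $p_k\ge h/(h+a)$. For $p_k<h/(h+a)$, on the window $h+a<\tilde v<h/p$, handoff is strictly cheaper than both retain and override, yet \eqref{eq:Val_quadcost_A} selects retain.

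The window is actually reached in the paper's own scalar example with $p=0.4$:
\begin{itemize}
\item the recursion gives $P^{\text{H}}_{31}\approx 6.104$ and $P^{\text{A}}_{31}\approx 6.389$, so $\tilde P_{31}\approx 0.590\in(0.55,0.875)=(H+A,H/p)$;
\item at $k=30$, handoff costs $G^{\text{A}}+\tilde B^2P^{\text{H}}_{31}+H+A\approx 6.605$;
\item the branch the corollary prescribes costs $G^{\text{A}}+\tilde C^2P^{\text{A}}_{31}\approx 6.645$.
\end{itemize}

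To repair this, argue directly from the three candidate matrices $M_{\mathrm{r}},M_{\mathrm{o}},M_{\mathrm{h}}$ (retain, override, handoff) appearing in \eqref{eq:Val_quadcost_A}. Their differences are $M_{\mathrm{r}}-M_{\mathrm{o}}=p_k\tilde P_{k+1}-H_k$, $M_{\mathrm{o}}-M_{\mathrm{h}}=(1-p_k)\tilde P_{k+1}-A_k$, and $M_{\mathrm{r}}-M_{\mathrm{h}}=\tilde P_{k+1}-H_k-A_k$. Take $P^{\text{A}}_k$ to be the Loewner-minimal candidate:
\begin{itemize}
\item retain needs $\tilde P_{k+1}\preceq H_k/p_k$ and $\tilde P_{k+1}\preceq H_k+A_k$;
\item override is as stated;
\item handoff needs $\tilde P_{k+1}\succeq A_k/(1-p_k)$ and $\tilde P_{k+1}\succeq H_k+A_k$.
\end{itemize}
The $\alpha_k=\text{H}$ case is fine, since both off-diagonal entries of $\Xi^{\text{H}}_{k+1}$ are dominated by the $(1,1)$ entry.

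Your definiteness restriction should then be stated for what it is. It is an additional hypothesis on $(P^{\text{H}}_{k+1},P^{\text{A}}_{k+1})$ at every step, namely that some candidate is Loewner-minimal. It is not a region of $x$, and it is not implied by Assumption~\ref{ast:linear_control_space} or the ansatz. Without it, the value is a pointwise minimum of quadratics and the `otherwise' matrix is only an upper bound.

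Finally, use non-strict inequalities in the branch tests. With the strict $\succ$ in \eqref{eq:Val_quadcost_H}, a singular $\tilde P_{k+1}+H_k+A_k\succeq 0$ is sent to the handoff matrix, although retain is Loewner-minimal there.
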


\begin{proof}{[Outline]}
    The proof directly follows from Theorem~\ref{th:NE_Val_gen_FDC}. Substituting the parametric form of the value function
    ~\eqref{eq:para_form_al_QC}, dynamics~\eqref{eq:linear_dynamics} and quadratic costs~\eqref{eq:cost_quad} in~\eqref{eq:TP_gen_H} and~\eqref{eq:TP_gen_A}, yields the policies~\eqref{eq:TP_quadcost_H} and~\eqref{eq:TP_quadcost_A}. Similar substitutions yield the value recursions~\eqref{eq:Val_quadcost_H} and~\eqref{eq:Val_quadcost_A}.
\end{proof}

Corollary~\ref{cor:NE_Val_FDC_H} yields offline-computable value function parameters and online-computable switching policies.
We now illustrate these results on scalar and 2D systems, examining sensitivity to the override probability and the effect of state dimension.

\section{Numerical Evaluation}\label{sec:Evaluation}

We illustrate the \texttt{Flip-Team} framework through two examples: a scalar LTI system demonstrating parameter sensitivity, and a 2D vehicle lateral regulation task showing how the framework scales to higher-dimensional systems.

\subsection{Scalar LTI System}\label{subsec:scalar_LTI}

\subsubsection{System Setup}

We evaluate Corollary~\ref{cor:NE_Val_FDC_H} on a discrete-time scalar LTI system over a finite horizon of $L = 40$. The system dynamics under human and autonomous control are characterized by:
\begin{equation*}
    \tilde{B} := E + BK = 1 - 0.9\Delta t, \quad \tilde{C} := E + CW = 1 - 0.6\Delta t,
\end{equation*}
with $E = 1.0$, $B = -0.9\Delta t$, $C = -0.6\Delta t$, and $\Delta t = 0.1$. The closed-loop coefficient $\tilde{B} < \tilde{C}$ indicates superior control performance under human authority, reflecting the human's ability to adapt under challenging conditions, but at a higher operational cost.

The cost parameters are:
\begin{equation*}
    G^{\text{H}} = 1.2, \quad G^{\text{A}} = 1.0, \quad H = 0.35, \quad A = 0.2,
\end{equation*}
where both state and takeover costs are higher for the human agent than for the autonomous agent. We assume time-invariant costs and override probability $p_k = p, \forall k \in \mathcal{K}$.

\subsubsection{Value Function Evolution}

Figures~\ref{fig:val_param_scalar}(a)--(c) show the value function parameters $P^{\text{H}}_k$ and $P^{\text{A}}_k$ for override probabilities $p \in \{0.4, 0.55, 0.7\}$, computed via the backward recursions~\eqref{eq:Val_quadcost_H}--\eqref{eq:Val_quadcost_A}. The corresponding optimal switching policies are shown in Figures~\ref{fig:policy_scalar}(a)--(c).



\textit{Interpretation:} The crossing of $P^{\text{H}}_k$ and $P^{\text{A}}_k$ does not directly trigger switching. Authority transfer occurs only when the threshold conditions in Corollary~\ref{cor:NE_Val_FDC_H} are satisfied, specifically, when $\tilde{P}_{k+1}$ exceeds the scaled takeover costs $H_k/p_k$ or $A_k/(1-p_k)$.

When $p = 0.4$ (low override reliability), the human rarely attempts unilateral override; both agents coordinate to switch early when $\alpha = \text{H}$. When $p = 0.7$ (high override reliability), the human takes control almost immediately under $\alpha = \text{A}$ since successful override is likely. The intermediate case $p = 0.55$ exhibits non-monotonic behavior, with the optimal policy alternating based on the evolving cost-to-go difference.

\begin{figure*}[t]
	\begin{center}
		\subfloat[]{\includegraphics[width = 0.32\linewidth]{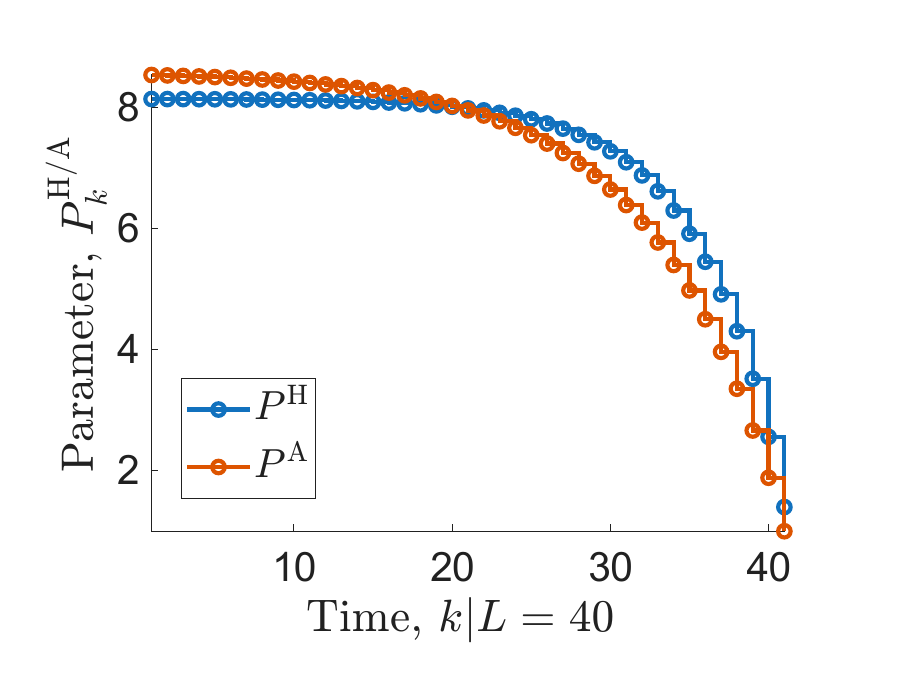}
			\label{fig:SPV_p_04}}
		\subfloat[]{\includegraphics[width = 0.32\linewidth]{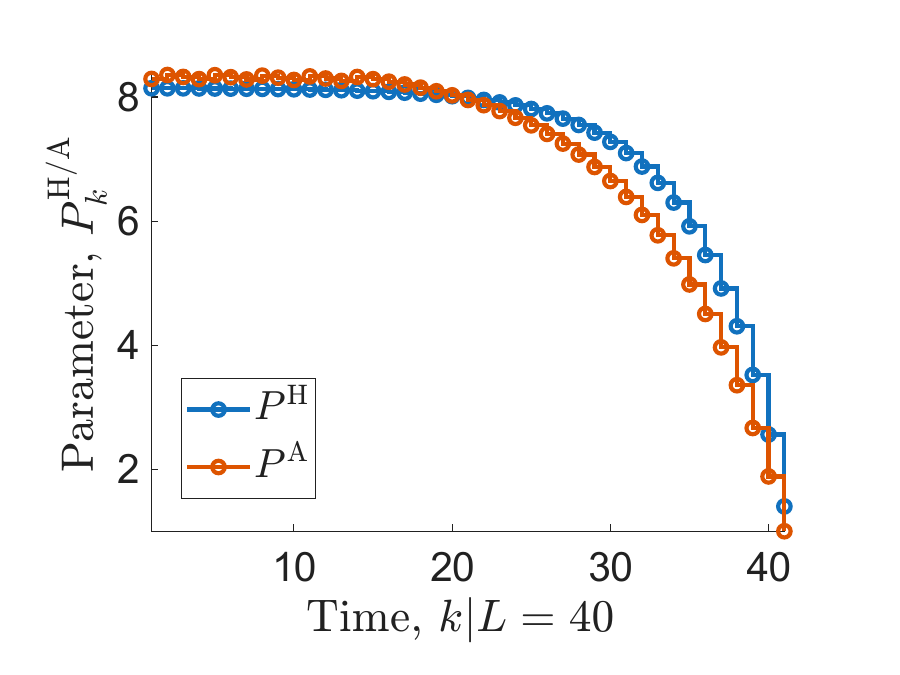}
			\label{fig:SPV_p_55}}
        \subfloat[]{\includegraphics[width = 0.32\linewidth]{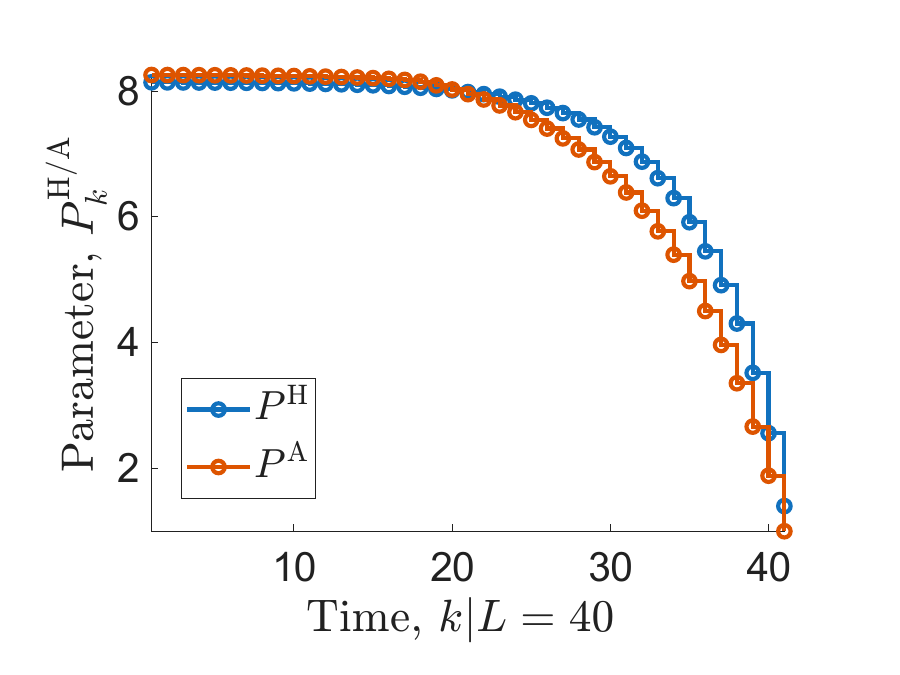}
			\label{fig:SPV_p_7}}
		\caption{\small Value function parameters $P^{\text{H}}_k$ and $P^{\text{A}}_k$ for a scalar LTI system with override probability (a) $p = 0.4$, (b) $p = 0.55$, and (c) $p = 0.7$.}
        \label{fig:val_param_scalar}
	\end{center}
\end{figure*}

\begin{figure*}[h]
	\begin{center}
		\subfloat[]{\includegraphics[width = 0.32\linewidth]{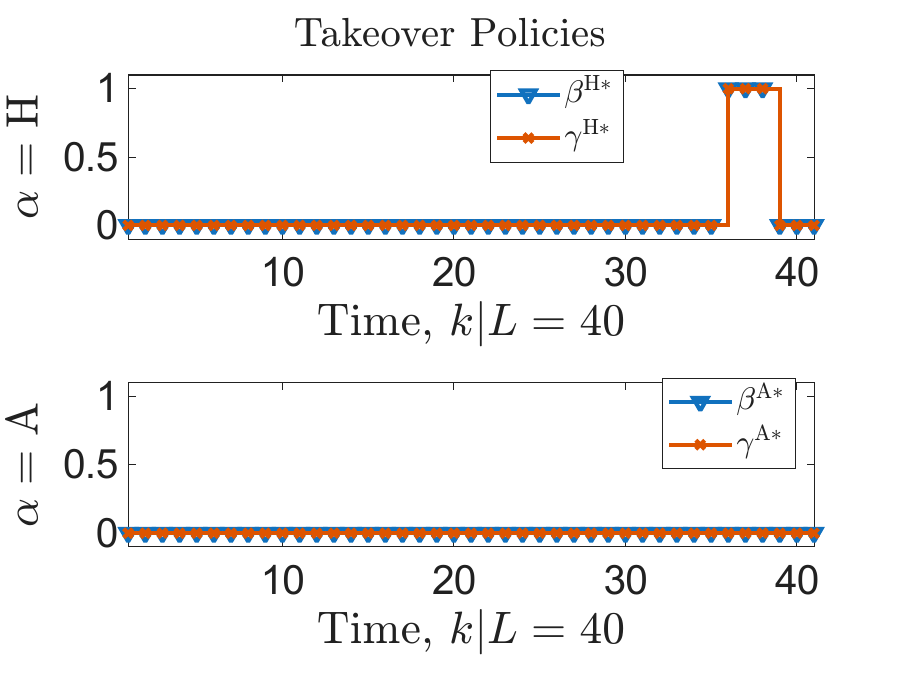}
			\label{fig:TP_p_04}}
		\subfloat[]{\includegraphics[width = 0.32\linewidth]{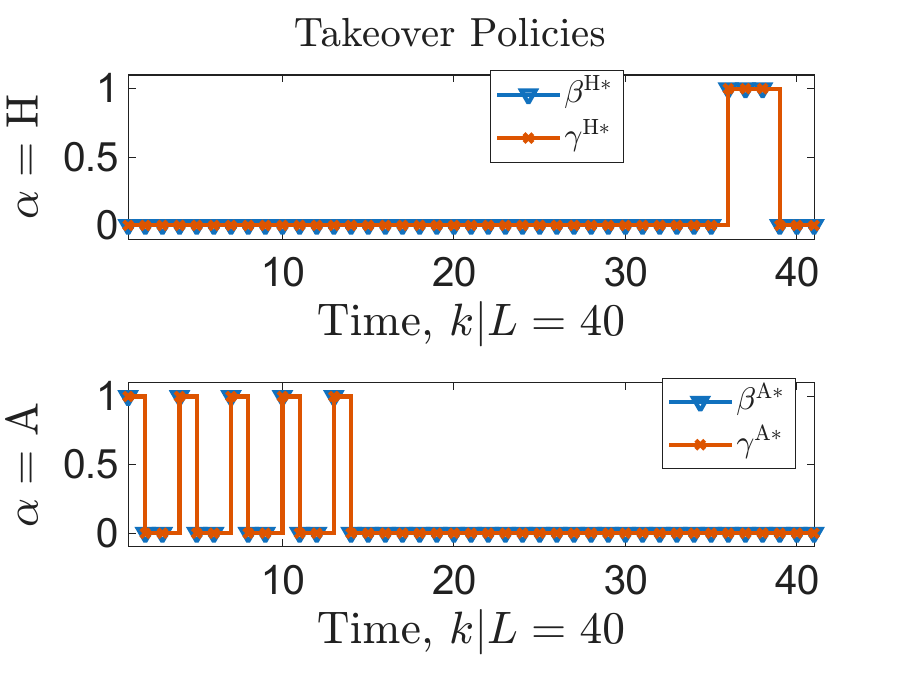}
			\label{fig:TP_p_55}}
        \subfloat[]{\includegraphics[width = 0.32\linewidth]{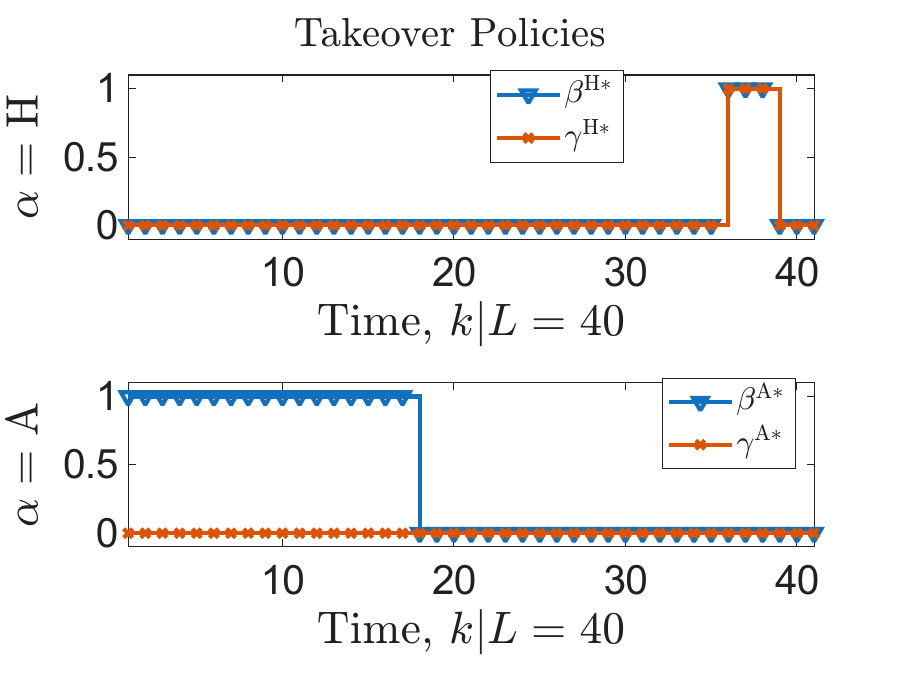}
			\label{fig:TP_p_7}}
		\caption{\small Optimal switching policies $\pi^{\text{H}*}$ and $\pi^{\text{A}*}$ conditioned on authority mode $\alpha \in \{\text{H}, \text{A}\}$ for override probability (a) $p = 0.4$, (b) $p = 0.55$, and (c) $p = 0.7$.}
        \label{fig:policy_scalar}
	\end{center}
\end{figure*}

\subsubsection{Sensitivity to Override Probability}

To address how the switching behavior depends on the override probability, we perform a sensitivity analysis over $p \in [0.1, 0.9]$.

\textit{Critical threshold:} From Corollary~\ref{cor:NE_Val_FDC_H}, the override regime $\{1, 0\}$ requires both (i) a non-empty feasibility band $[H/p, A/(1-p)]$, and (ii) the cost-to-go difference $\tilde{P}_{k+1}$ lying within this band.
Condition (i) yields the necessary threshold:
\begin{equation}\label{eq:critical_threshold}
    p \geq p^* = \frac{H}{H + A} = 0.636.
\end{equation}

Figure~\ref{fig:sensitivity_p} shows the number of time steps at which override is optimal.
For $p < p^*$, the band is empty and no override occurs.
For $p$ slightly above $p^*$, the band exists but is too narrow to capture the realized values of $\tilde{P}_{k+1}$.
Override becomes active around $p \approx 0.68$, where the band has widened sufficiently.
At $p = 0.9$, override is optimal for 32 of 40 time steps.


\begin{figure}[t]
    \centering
    \includegraphics[width=0.8\linewidth]{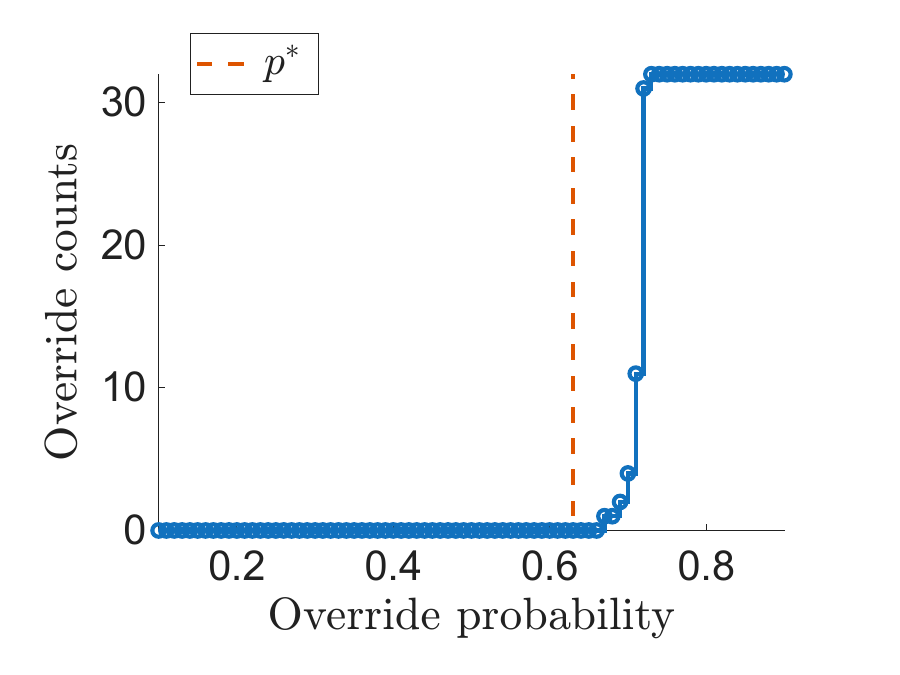}
    \caption{Override count versus override probability $p$. The dashed line marks the necessary threshold $p^* = H/(H+A) = 0.636$. Override occurs only when $p > p^*$ and the cost-to-go difference $\tilde{P}_{k+1}$ lies within the feasible band $[H/p, A/(1-p)]$, which widens as $p$ increases.}
    \label{fig:sensitivity_p}
\end{figure}



\subsection{Vehicle Lateral Regulation}\label{subsec:vehicle_lateral}

We now demonstrate the framework on a 2D vehicle lateral regulation task, illustrating how the switching conditions in Corollary~\ref{cor:NE_Val_FDC_H} operate in higher dimensions.

\subsubsection{System Setup}

Consider a vehicle maintaining lane center on a straight road segment. The state $x_k = [e_y, e_\psi]^\top \in \mathbb{R}^2$ comprises lateral deviation $e_y$ (meters) from lane center and heading error $e_\psi$ (radians). The discrete-time dynamics are:
\begin{equation}\label{eq:vehicle_dynamics}
    x_{k+1} = E x_k + B u_k, \quad E = \begin{bmatrix} 1 & v\Delta t \\ 0 & 1 \end{bmatrix}, \quad B = \begin{bmatrix} 0 \\ \Delta t \end{bmatrix},
\end{equation}
where $v = 15$ m/s is the longitudinal velocity, $\Delta t = 0.1$ s, and $u_k \in \mathbb{R}$ is the steering input. The autonomous agent uses the same dynamics with control matrix $C = B$.

We design linear feedback controllers for each agent:
\begin{itemize}
    \item \textit{Human controller}: $K_{\text{H}} = [-0.8, -1.5]$, reflecting aggressive correction of lateral deviation.
    \item \textit{Autonomous controller}: $K_{\text{A}} = [-0.5, -1.0]$, reflecting smoother, more conservative steering.
\end{itemize}
The resulting closed-loop matrices are:
\begin{equation*}
    \tilde{B} = E + BK_{\text{H}}, \quad \tilde{C} = E + BK_{\text{A}}.
\end{equation*}

The cost matrices are chosen to reflect asymmetric operational characteristics:
\begin{equation*}
    G^{\text{H}} = \begin{bmatrix} 1.5 & 0 \\ 0 & 1.0 \end{bmatrix}, \quad G^{\text{A}} = \begin{bmatrix} 1.0 & 0 \\ 0 & 0.8 \end{bmatrix},
\end{equation*}
with human incurring higher state cost (cognitive effort for precision), and
\begin{equation*}
    H = \begin{bmatrix} 0.4 & 0 \\ 0 & 0.3 \end{bmatrix}, \quad A = \begin{bmatrix} 0.2 & 0 \\ 0 & 0.15 \end{bmatrix},
\end{equation*}
reflecting higher switching cost for human takeover (attention reorientation) than for autonomous handoff.

The horizon is $L = 30$ with override probability $p_k = p = 0.6, \forall k$.

\subsubsection{Results}

Table~\ref{tab:vehicle_comparison} compares the spectral norm of the value function parameter $\lambda_{\max}(P_0)$ across policies.
Since $x^\top P_0 x \leq \lambda_{\max}(P_0)\|x\|^2$ for all initial states $x \neq 0$, this metric provides a uniform cost bound independent of the initial condition.

\begin{table}[t]
    \centering
    \caption{Performance comparison for vehicle lateral regulation ($p = 0.6$, $L = 30$).}
    \label{tab:vehicle_comparison}
    \begin{tabular}{lcc}
        \toprule
        \textbf{Policy} & $\lambda_{\max}(P_0)$ & \textbf{Improvement} \\
        \midrule
        Always-autonomous & 270.19 & (baseline) \\
        Always-human & 222.54 & 17.64\% \\
        Fixed-interval ($N=10$) & 230.20 & 14.8\% \\
        \texttt{Flip-Team} ($\alpha_1 = \text{A}$) & 219.79 & 18.65\% \\
        \texttt{Flip-Team} ($\alpha_1 = \text{H}$) & 219.54 & 18.75\% \\
        \bottomrule
    \end{tabular}
\end{table}

The \texttt{Flip-Team} policy achieves 18.65\% reduction in worst-case cost compared to always-autonomous, demonstrating that optimal authority switching improves performance uniformly across the state space.

\subsubsection{Effect of State Dimension}

In the scalar case, the switching condition $\tilde{P}_{k+1} \geq H/p$ is a scalar comparison.
In higher dimensions, the matrix inequality $\tilde{P}_{k+1} \succeq H/p$ requires positive semidefiniteness of $\tilde{P}_{k+1} - H/p$, introducing state-dependent switching: for some states $x$, the quadratic form $x^\top(\tilde{P}_{k+1} - H/p)x$ may be positive while for others it may be negative.
This means override may be optimal for states aligned with the dominant eigenvectors of $\tilde{P}_{k+1} - H/p$, but not for others, a fundamental structure change compared to scalar case, where switching is state-independent.

\begin{remark}[Computational efficiency]
The value function recursions~\eqref{eq:Val_quadcost_H}--\eqref{eq:Val_quadcost_A} compute $P^{\text{H}}_k, P^{\text{A}}_k \in \mathbb{R}^{n \times n}$ in $O(n^3)$ time per step, independent of the continuous state space. 
\end{remark}
\section{Conclusion}\label{sec:Conclusion}

This paper presented \texttt{Flip-Team}, a cooperative game-theoretic framework for authority switching in shared autonomy.
We formulated the human--autonomy takeover problem as an identical-interest dynamic game with switching dynamics embedded into the system evolution, capturing asymmetric authority where humans retain override capability.

The main contributions are threefold.
First, we formulated the human--autonomy takeover problem as an identical-interest dynamic game in which the switching decision is embedded into the system dynamics, capturing asymmetric authority, stochastic human override, and state-dependent costs within a unified framework.
Second, we established the existence of team-optimal switching policies and characterized them through explicit threshold conditions for when authority transfer occurs in general dynamical systems (Theorem~\ref{th:NE_Val_gen_FDC}).
Third, for linear-quadratic systems, we derived closed-form recursions for the optimal switching policies and value functions, with switching thresholds that are independent of the continuous state (Corollary~\ref{cor:NE_Val_FDC_H}).

Numerical evaluation on scalar and 2D vehicle lateral regulation tasks demonstrated sensitivity of the switching behavior to the override probability $p$, and showed that \texttt{Flip-Team} achieves lower worst-case cost than fixed-authority baselines.

\textit{Limitations:}
The framework assumes identical interests between human and autonomy; extensions to partially misaligned objectives require different solution techniques.
The override probability $p_k$ is treated as known; in deployment, it must be estimated from observed interventions.
The closed-form results are restricted to LQ systems; for nonlinear dynamics, iterative linearization methods (e.g., iLQR) can approximate the optimal policies.

\textit{Future work:}
Directions include online learning of $p_k$ from intervention data, extension to Bayesian games with private information where agents have asymmetric knowledge, and validation on physical platforms with human-in-the-loop experiments.

{\bf Acknowledgements.} This work is supported by the Air Force Office of Scientific Research Grant (AFOSR) Grant AF FA9550-25-1-0274, the National Aeronautics and Space Administration (NASA) under Grant 80NSSC22M0070, and by the National Science Foundation (NSF) under Grants CMMI 2135925, CPS 2311085 and IIS 2331878.

\bibliographystyle{IEEEtran}
\bibliography{IEEEabrv, myRefs}

@article{abbinkTopologySharedControl2018,
  title = {{A {{Topology}} of {{Shared Control Systems}}---{{Finding Common Ground}} in {{Diversity}}}},
  author = {Abbink, David A. and Carlson, Tom and Mulder, Mark and {de Winter}, Joost C. F. and Aminravan, Farzad and Gibo, Tricia L. and Boer, Erwin R.},
  year = {2018},
  month = oct,
  journal = {IEEE Transactions on Human-Machine Systems},
  volume = {48},
  number = {5},
  pages = {509--525},
  issn = {2168-2305},
  url = {https://doi.org/10.1109/THMS.2018.2791570},
  urldate = {2025-09-30}
}

@article{loseyReviewIntentDetection2018,
  title = {{A {{Review}} of {{Intent Detection}}, {{Arbitration}}, and {{Communication Aspects}} of {{Shared Control}} for {{Physical Human}}--{{Robot Interaction}}}},
  author = {Losey, Dylan P. and McDonald, Craig G. and Battaglia, Edoardo and O'Malley, Marcia K.},
  year = {2018},
  month = feb,
  journal = {Applied Mechanics Reviews},
  volume = {70},
  number = {010804},
  issn = {0003-6900},
  url = {https://doi.org/10.1115/1.4039145},
  urldate = {2025-09-30}
}

@article{aarnoMotionIntentionRecognition2008,
  title = {{Motion Intention Recognition in Robot Assisted Applications}},
  author = {Aarno, Daniel and Kragic, Danica},
  year = {2008},
  month = aug,
  journal = {Robotics and Autonomous Systems},
  volume = {56},
  number = {8},
  pages = {692--705},
  issn = {0921-8890},
  url = {10.1016/j.robot.2007.11.005},
  urldate = {2025-09-30}
}

@misc{alambeigiCrashThemesAutomated2020,
  title = {{Crash {{Themes}} in {{Automated Vehicles}}: {{A Topic Modeling Analysis}} of the {{California Department}} of {{Motor Vehicles Automated Vehicle Crash Database}}}},
  shorttitle = {Crash {{Themes}} in {{Automated Vehicles}}},
  author = {Alambeigi, Hananeh and McDonald, Anthony D. and Tankasala, Srinivas R.},
  year = {2020},
  month = jan,
  number = {arXiv:2001.11087},
  eprint = {2001.11087},
  primaryclass = {stat},
  publisher = {arXiv},
  url = {https://arxiv.org/abs/2001.11087},
  urldate = {2025-09-30},
  archiveprefix = {arXiv}
}

@article{gouraudAutopilotMindWandering2017,
  title = {{Autopilot, {{Mind Wandering}}, and the {{Out}} of the {{Loop Performance Problem}}}},
  author = {Gouraud, Jonas and Delorme, Arnaud and Berberian, Bruno},
  year = {2017},
  month = oct,
  journal = {Frontiers in Neuroscience},
  volume = {11},
  publisher = {Frontiers},
  issn = {1662-453X},
  url = {https://doi.org/10.3389/fnins.2017.00541},
  urldate = {2025-09-30},
  langid = {english}
}

@inproceedings{banikFlipDynGraphsResource2025,
  title = {{{{FlipDyn}} in~{{Graphs}}: {{Resource Takeover Games}} in~{{Graphs}}}},
  shorttitle = {{{FlipDyn}} in~{{Graphs}}},
  booktitle = {Decision and {{Game Theory}} for {{Security}}},
  author = {Banik, Sandeep and Bopardikar, Shaunak D. and Hovakimyan, Naira},
  editor = {Sinha, Arunesh and Fu, Jie and Zhu, Quanyan and Zhang, Tao},
  year = {2025},
  pages = {220--239},
  publisher = {Springer Nature Switzerland},
  address = {Cham},
  url = {https://doi.org/10.1007/978-3-031-74835-6\_11},
  isbn = {978-3-031-74835-6},
  langid = {english}
}

@article{vandijkFlipItGameStealthy2013,
  title = {{{{FlipIt}}: {{The Game}} of ``{{Stealthy Takeover}}''}},
  shorttitle = {{{FlipIt}}},
  author = {{van Dijk}, Marten and Juels, Ari and Oprea, Alina and Rivest, Ronald L.},
  year = {2013},
  month = oct,
  journal = {Journal of Cryptology},
  volume = {26},
  number = {4},
  pages = {655--713},
  issn = {1432-1378},
  url = {https://doi.org/10.1007/s00145-012-9134-5},
  urldate = {2025-08-19},
  langid = {english}
}

@article{loseyPhysicalInteractionCommunication2022,
  title = {{Physical Interaction as Communication: {{Learning}} Robot Objectives Online from Human Corrections}},
  shorttitle = {Physical Interaction as Communication},
  author = {Losey, Dylan P. and Bajcsy, Andrea and O’Malley, Marcia K. and Dragan, Anca D.},
  date = {2022-01-01},
  journaltitle = {The International Journal of Robotics Research},
  volume = {41},
  number = {1},
  pages = {20--44},
  publisher = {SAGE Publications Ltd STM},
  issn = {0278-3649},
  doi = {10.1177/02783649211050958},
  url = {https://doi.org/10.1177/02783649211050958},
  urldate = {2025-08-19},
  langid = {english}
}

@article{scerriAdjustableAutonomyReal2002,
  title = {{Towards {{Adjustable Autonomy}} for the {{Real World}}}},
  author = {Scerri, P. and Pynadath, D. V. and Tambe, M.},
  year = {2002},
  month = sep,
  journal = {Journal of Artificial Intelligence Research},
  volume = {17},
  pages = {171--228},
  issn = {1076-9757},
  url = {https://doi.org/10.1613/jair.1037},
  urldate = {2025-08-19},
  copyright = {Copyright (c)},
  langid = {english}
}

@article{draganPolicyblendingFormalismShared2013,
  title = {{A Policy-Blending Formalism for Shared Control}},
  author = {Dragan, Anca D and Srinivasa, Siddhartha S},
  year = {2013},
  month = jun,
  journal = {The International Journal of Robotics Research},
  volume = {32},
  number = {7},
  pages = {790--805},
  publisher = {SAGE Publications Ltd STM},
  issn = {0278-3649},
  url = {https://doi.org/10.1177/0278364913490324},
  urldate = {2025-04-21}
}

@inproceedings{nikolaidisGameTheoreticModelingHuman2017,
  title = {{Game-{{Theoretic Modeling}} of {{Human Adaptation}} in {{Human-Robot Collaboration}}}},
  booktitle = {Proceedings of the 2017 {{ACM}}/{{IEEE International Conference}} on {{Human-Robot Interaction}}},
  author = {Nikolaidis, Stefanos and Nath, Swaprava and Procaccia, Ariel D. and Srinivasa, Siddhartha},
  year = {2017},
  month = mar,
  series = {{{HRI}} '17},
  pages = {323--331},
  publisher = {Association for Computing Machinery},
  address = {New York, NY, USA},
  doi = {https://doi.org/10.1145/2909824.3020253},
  urldate = {2025-08-19},
  isbn = {978-1-4503-4336-7}
}

@article{liDifferentialGameTheory2019,
  title = {{Differential Game Theory for Versatile Physical Human--Robot Interaction}},
  author = {Li, Y. and Carboni, G. and Gonzalez, F. and Campolo, D. and Burdet, E.},
  year = {2019},
  month = jan,
  journal = {Nature Machine Intelligence},
  volume = {1},
  number = {1},
  pages = {36--43},
  publisher = {Nature Publishing Group},
  issn = {2522-5839},
  url = {https://doi.org/10.1038/s42256-018-0010-3},
  urldate = {2025-08-19},
  copyright = {2019 The Author(s), under exclusive licence to Springer Nature Limited},
  langid = {english}
}

@misc{jeonSharedAutonomyLearned2020,
  title = {{Shared {{Autonomy}} with {{Learned Latent Actions}}}},
  author = {Jeon, Hong Jun and Losey, Dylan P. and Sadigh, Dorsa},
  year = {2020},
  month = may,
  number = {arXiv:2005.03210},
  eprint = {2005.03210},
  primaryclass = {cs},
  publisher = {arXiv},
  url = {https://doi.org/10.48550/arXiv.2005.03210},
  urldate = {2025-08-19},
  archiveprefix = {arXiv}
}

@misc{reddySharedAutonomyDeep2018,
  title = {{Shared {{Autonomy}} via {{Deep Reinforcement Learning}}}},
  author = {Reddy, Siddharth and Dragan, Anca D. and Levine, Sergey},
  year = {2018},
  month = may,
  number = {arXiv:1802.01744},
  eprint = {1802.01744},
  primaryclass = {cs},
  publisher = {arXiv},
  url = {https://doi.org/10.48550/arXiv.1802.01744},
  urldate = {2025-08-19},
  archiveprefix = {arXiv}
}

@inproceedings{banikFlipDynGameResource2022a,
  title = {{{FlipDyn}: {A} Game of Resource Takeovers in Dynamical Systems}},
  shorttitle = {{{FlipDyn}}},
  booktitle = {2022 {{IEEE}} 61st {{Conference}} on {{Decision}} and {{Control}} ({{CDC}})},
  author = {Banik, Sandeep and Bopardikar, Shaunak D.},
  year = {2022},
  month = dec,
  pages = {2506--2511},
  issn = {2576-2370},
  url = {https://doi.org/10.1109/CDC51059.2022.9992387},
  urldate = {2025-08-19}
}

@inproceedings{sadighPlanningAutonomousCars2016,
  title = {{Planning for {{Autonomous Cars}} That {{Leverage Effects}} on {{Human Actions}}}},
  booktitle = {Robotics: {{Science}} and {{Systems XII}}},
  author = {Sadigh, Dorsa and Sastry, Shankar and A. Seshia, Sanjit and D. Dragan, Anca},
  year = {2016},
  publisher = {{Robotics: Science and Systems Foundation}},
  url = {https://doi.org/10.15607/RSS.2016.XII.029},
  urldate = {2025-08-18},
  isbn = {978-0-9923747-2-3},
  langid = {english}
}

@inproceedings{nikolaidisHumanRobotMutualAdaptation2017,
  title = {{Human-{{Robot Mutual Adaptation}} in {{Shared Autonomy}}}},
  booktitle = {Proceedings of the 2017 {{ACM}}/{{IEEE International Conference}} on {{Human-Robot Interaction}}},
  author = {Nikolaidis, Stefanos and Zhu, Yu Xiang and Hsu, David and Srinivasa, Siddhartha},
  year = {2017},
  month = mar,
  pages = {294--302},
  publisher = {ACM},
  address = {Vienna Austria},
  url = {https://doi.org/10.1145/2909824.3020252},
  urldate = {2025-08-18},
  isbn = {978-1-4503-4336-7},
  langid = {english}
}

@article{javdaniSharedAutonomyHindsight2015,
author = {Shervin Javdani and Henny Admoni and Stefania Pellegrinelli and Siddhartha S. Srinivasa and J. Andrew Bagnell},
title ={{{Shared Autonomy via Hindsight Optimization for Teleoperation and Teaming}}},

journal = {The International Journal of Robotics Research},
volume = {37},
number = {7},
pages = {717-742},
year = {2018},
doi = {10.1177/0278364918776060},
URL = {https://doi.org/10.1177/0278364918776060},
eprint = {https://doi.org/10.1177/0278364918776060}
}

@article{zhu2015game,
  title={{{Game-Theoretic Methods for Robustness, Security, and Resilience of Cyberphysical Control Systems: Games-in-Games Principle for Optimal Cross-Layer Resilient Control Systems}}},
  author={Zhu, Quanyan and Basar, Tamer},
  journal={IEEE {C}ontrol Systems Magazine},
  volume={35},
  number={1},
  pages={46--65},
  year={2015},
  publisher={IEEE},
  url = {https://doi.org/10.1109/MCS.2014.2364710},
}

@book{hespanha2017noncooperative,
	title={{Noncooperative game theory: An introduction for engineers and computer scientists}},
	author={Hespanha, Jo{\~a}o P},
	year={2017},
	publisher={Princeton University Press}
}

\end{document}